\def\label#1{%
 \@bsphack
  \protected@write\@auxout{}{%
   \string\newlabel{#1}{{\@currentlabel}{\thepage}{}{}{}}%
  }%
 \@esphack
}%
\tikzset{snake it/.style={decorate, decoration=snake}}
\newtheorem{theorem}{Theorem}
\newtheorem{lemma}{Lemma}
\newtheorem{definition}{Definition}
\newcommand{\be}{\begin{equation}}
\newcommand{\ee}{\end{equation}}
\newcommand{\bea}{\begin{eqnarray}}
\newcommand{\eea}{\end{eqnarray}}
\newcommand{\bega}{\begin{gather}}
\newcommand{\eega}{\end{gather}}
\newcommand{\nn}{\nonumber}
\newcommand{\bi}{\begin{itemize}}
\newcommand{\ei}{\end{itemize}}
\newcommand{\ben}{\begin{enumerate}}
\newcommand{\een}{\end{enumerate}}
\newcommand{\bca}{\begin{cases}}
\newcommand{\eca}{\end{cases}}
\newcommand{\bln}{\begin{align}}
\newcommand{\eln}{\end{align}}
\newcommand{\bst}{\begin{split}}
\newcommand{\est}{\end{split}}
\def\ie{\begin{equation}\begin{aligned}}
\def\fe{\end{aligned}\end{equation}}
\newcommand{\bma}{\le(\begin{matrix}}
\newcommand{\ema}{\end{matrix}\ri)}
\def\le{\left}
\def\ri{\right}
\newcommand{\beq}{\begin{equation}}
\newcommand{\eeq}{\end{equation}}
\newcommand{\norm}[1]{\parallel #1 \parallel}
\begin{document}

\newcommand\saa[1]{{\textcolor{teal}{[\textbf{SAA}: #1]}}}
\newcommand\MK[1]{{\color{red}[\textbf{MK}: #1]}}
\newcommand\SL[1]{{\textcolor{olive}{[\textbf{SL}: #1]}}}

\newtheorem{conjecture}{Conjecture}
\newtheorem{example}{Example}

 \newcommand{\nyu}[1]{
	\centerline{
		\begin{minipage}[c]{0.7\textwidth}
			\begin{center}
			${}^{#1}$ Center for Cosmology and Particle Physics, New York University, New York, NY 10003, USA
			\end{center}
		\end{minipage}
		}}
\usetikzlibrary{decorations.pathmorphing}
\usetikzlibrary{calc}

\title{Semifinite von Neumann algebras in gauge theory and gravity}

\author{Shadi Ali Ahmad}
\affiliation{Center for Cosmology and Particle Physics, New York University, New York, NY 10003, USA}
\author{Marc S. Klinger}
\affiliation{Illinois Center for Advanced Studies of the Universe \& Department of Physics,\\ 
			University of Illinois, 1110 West Green St., Urbana IL 61801, U.S.A.}
\author{Simon Lin}
\affiliation{Center for Cosmology and Particle Physics, New York University, New York, NY 10003, USA}
\begin{abstract}
    
\end{abstract}
\date{\today}

\begin{abstract}
von Neumann algebras have been playing an increasingly important role in the context of gauge theories and gravity. The crossed product presents a natural method for implementing constraints through the commutation theorem, rendering it a useful tool for constructing gauge invariant algebras. The crossed product of a Type III algebra with its modular automorphism group is semifinite, which means that the crossed product regulates divergences in local quantum field theories. In this letter, we find a sufficient condition for the semifiniteness of the crossed product of a type III algebra with \emph{any} locally compact group containing the modular automorphism group. Our condition surprisingly implies the centrality of the modular flow in the symmetry group, and we provide evidence for the necessity of this condition. Under these conditions, we construct an associated trace which computes physical expectation values. We comment on the importance of this result and its implications for subregion physics in gauge theory and gravity.

\end{abstract}

\maketitle

\section{Introduction}
In quantum field theory, one can assign an algebra of observables to any open set of spacetime which may be weak completed in a Hilbert space to a von Neumann algebra~\cite{haag_local_1993, Araki:1999ar, fewster2019algebraic}. Since von Neuman algebras admit a classification into factors of different types organized by the properties of their lattice of projections~\cite{murray1936rings,murray1937rings, neumann1940rings, murray1943rings, araki1968classification, connes1973theoreme}, a natural question to ask is: what is the type of algebras of observables appearing in quantum field theory? The answer was given in seminal works~\cite{araki_type_1964, araki_lattice_2004,driessler1977type,Fredenhagen:1984dc,buchholz1987universal, buchholz1995scaling, araki_von_2004}, where it was argued that generically such algebras are Type III$_{1}$. One striking property of such algebras is that they are not semifinite, which introduces a plethora of divergences in the computation of physical observables and entropies~\cite{Sorkin:1984kjy, Bombelli:1986, Srednicki:1993,Susskind:1994,McGuigan:1994, Callan:1994,Holzhey:1994}. While this is not a deficiency of quantum field theory itself, it does hinder our ability to quantitatively answer questions involving quantum information within the standard framework. To rigorously approach certain questions would therefore require the replacement of the usual algebras of observables of quantum field theory with semifinite ones endowed with a trace, whose existence is tied to the regulation of the familiar divergences.


The structure theory of Type III$_{1}$ factors was analyzed early on by Takesaki~\cite{takesaki1973duality}, where it was shown that the crossed product of such an algebra with its modular automorphism group is semifinite, even though the original algebra is not. The modular crossed product then allows us to access the semifinite features of the original algebra. Indeed, this result applied to physical scenarios has recently given us an understanding of generalized entropy from a Lorentzian and canonical picture~\cite{witten_gravity_2022, chandrasekaran_algebra_2022,chandrasekaran_large_2022, Jensen:2023yxy, AliAhmad:2023etg, Klinger:2023tgi, Kudler-Flam:2023hkl}. 

By the commutation theorem and its generalizations~\cite{takesaki1973duality,  Takai1975,1977727,1977CMaPh..52..191L, ImaiTakai1978,Pedersen1979,92a701b6-511a-3206-a6c2-c00a4d6f7655, Raeburn1987,Raeburn1988}, the crossed product algebra respects natural constraints associated with the group used to form it. Recent progress in understanding the relevance of the crossed product algebra in the context of gauge theory and gravity has suggested that, in some cases, one can interpret it as the gauge-invariant algebra of observables of the theory~\cite{Klinger:2023tgi,Klinger:2023auu}. Thus, one has incentive to perform the crossed product of algebras of observables by groups that are not necessarily the modular automorphism group~\cite{witten_gravity_2022, AliAhmad:2023etg, Klinger:2023tgi, Klinger:2023auu}.\footnote{Different perspectives on the crossed product in the context of quantum reference frames appear in Refs.~\cite{Gomez:2023wrq,Fewster:2024pur, DeVuyst:2024pop, AliAhmad:2024wja}.} However, the semifiniteness of crossed products with more general groups has, to our knowledge, not been considered in previous literature with the exception of Refs.~\cite{Kudler-Flam:2023qfl,Fewster:2024pur} which consider direct products between $\mathbb{R}$ and a compact group. Without a semifinite trace on the gauge-invariant algebra of observables, one runs again into familiar divergences from quantum field theory when computing entropic observables.

In this letter, we provide a sufficient condition guaranteeing the semifiniteness of the crossed product of a Type III algebra with a locally compact group.
Our result puts the crossed product on more solid physical ground as it allows for the definition of a semifinite trace in gauge theory and gravity. Our Theorem~\ref{thm:main} leverages the theory of dual weights which allows one to understand the modular structure of the crossed product in terms of that of the original algebra~\cite{Digernes:1975,haagerup1978dualI,haagerup1978dualII, Takesaki2}. We phrase our result in terms of physical conditions, namely the existence of a suitably invariant weight on the original algebra satisfying the Kubo–Martin–Schwinger~(KMS) condition relative to a subgroup of the relevant symmetry~\cite{Kubo:1957,Martin:1959, Haag:1967}. We then construct a trace on the resultant semifinite algebra which reduces to the known expression in the modular case. Our condition has a surprising implication: the modular flow must be a central subgroup of the symmetry group. We investigate this further and prove Theorem~\ref{thm:iff} showcasing the \textit{necessity} of this centrality in some cases.

\section{Preliminaries on Modular Theory}

In this section we provide a brief review of the machinery of modular theory. This discussion is meant to bring the reader up to speed with tools we plan to implement in the main analysis of the work such as faithful, semifinite, normal weights, relative modular operators, and the modular automorphism. Naturally, the discussion is not entirely comprehensive. For a detailed exposition we refer the reader to \cite{Digernes:1975,Takesaki2}, or \cite{witten_notes_2018,Sorce:2023fdx} for more physically motivated presentations.

A $C^*$ algebra is an involutive Banach algebra\footnote{Recall that a Banach algebra is an algebra which is also a complete normed vector space.} satisfying the $C^*$ identity
\beq \label{Cstar condition}
	\norm{a^*a} = \norm{a}^2, \; \forall a \in A. 
\eeq
We refer to the topology induced by $\norm{\cdot}$ as the uniform or norm topology. A weight on $A$ is a map $\varphi: A_+ \rightarrow \mathbb{R}_+ \cup \{\infty\}$ satisfying
\beq
	\varphi(a + \lambda b) = \varphi(a) + \lambda \varphi(b), \; \forall a,b \in A, \; \lambda \in \mathbb{R}_+.
\eeq 
Any weight can be uniquely extended to a linear map $\varphi: A \rightarrow \mathbb{C}$, which we regard as an element of the predual $A_*$.\footnote{Recall that the predual of a Banach space $A$ is a Banach space whose Banach space dual is equal to $A$.} The algebra $A$ is termed unital if it contains an identity element. A state on a unital algebra $A$ is a weight which is moreover normalized in the sense that $\varphi(\mathbb{1}) = 1$. In general, a weight may assign an infinite value to an operator $a \in A$. The domain of the weight is defined
\beq
    \mathfrak{n}_{\varphi} \equiv \{a \in A \; | \; \varphi(a^* a) < \infty\}. 
\eeq

By the Gelfand-Naimark theorem \cite{gelfand1994imbedding} every $C^*$ algebra possesses a Hilbert space representation, meaning a faithful map $\pi: A \rightarrow B(H)$. Given such a representation, we can introduce the commutant which is the set of all bounded operators on $H$ which commute with $\pi(A)$:
\beq \label{Commutant}
	\pi(A)' \equiv \{\mathcal{O} \in B(H) \; | \; [\mathcal{O},\pi(a)] = 0, \; \forall a \in A\}. 
\eeq
If $\pi(A)'' = \pi(A)$, the algebra $A$ is called a von Neumann algebra. Note that the double commutant of any $C^*$ algebra, or more generally any algebra represented on a given Hilbert space, always defines a von Neumann algebra \cite{v1930algebra}. Alternatively, a von Neumann algebra can be defined as a $C^*$ algebra which is closed under the weak operator topology induced by a given Hilbert space representation. For this reason, we can think of a von Neumann algebra as a particular subalgebra, $M < B(H)$, for some Hilbert space $H$ such that $M = M''$.

Given a $C^*$ algebra, $A$, and a weight, $\varphi$, there is a natural procedure by which one may obtain a Hilbert space representation and by extension a von Neumann algebra. This procedure is called the Gelfand–Naimark–Segal~(GNS) construction \cite{segal1947irreducible, kadison1986fundamentals}, which we review in the following.

To begin, we define the kernel of the weight $\varphi$:
\beq \label{kernel of varphi}
	k_{\varphi} \equiv \{a \in A \; | \; \varphi(a^*a) = 0\}. 
\eeq
Let $\eta_{\varphi}: \mathfrak{n}_{\varphi} \rightarrow \mathfrak{n}_{\varphi}/k_{\varphi}$ denote the projection from $\mathfrak{n}_{\varphi}$ into the quotient space defined by equivalence up to elements in \eqref{kernel of varphi}. The space $\mathfrak{n}_{\varphi}/k_{\varphi}$ is a pre-closed inner product space with bilinear form
\beq
	g_{\varphi}: \bigg(\mathfrak{n}_{\varphi}/k_{\varphi}\bigg)^{\times 2} \rightarrow \mathbb{C}, \qquad g_{\varphi}\bigg(\eta_{\varphi}(a),\eta_{\varphi}(b)\bigg) = \varphi(a^*b).
\eeq
Thus, we can complete $\mathfrak{n}_{\varphi}/k_{\varphi}$ to a Hilbert space which we denote by $H_{\varphi}$. The algebra $A$ inherits a representation on $\mathfrak{n}_{\varphi}/k_{\varphi}$ from its natural product structure as
\beq \label{GNS rep}
	\pi_{\varphi}: A \rightarrow \mathfrak{n}_{\varphi}/k_{\varphi},\qquad \pi_{\varphi}(a)\bigg(\eta_{\varphi}(b)\bigg) = \eta_{\varphi}(ab), 
\eeq	 
which can be promoted to a representation on $H_{\varphi}$. The set $(H_{\varphi},\pi_{\varphi},\eta_{\varphi})$ is called the GNS triple induced by the weight $\varphi$. The von Neumann algebra induced by $\varphi$ is given by 
\beq \label{GNS vN}
	M_{\varphi} \equiv \bigg(\pi_{\varphi}(A)\bigg)''. 
\eeq

Now, let us turn our attention to weights on von Neumann algebras. One important topology for von Neumann algebras is the ultraweak topology. For $M= B(H)$, this is defined as the weakest topology such that the trace-class operators (the predual of $M$) remain continuous on $M$. Similarly for a general von Neumann algebra $M$, it is the weakest topology for which the predual is continuous on $M$.

As every von Neumann algebra is also a $C^*$ algebra, the definition of a weight is the same in both cases. Let $\varphi$ be a weight on $M$. The weight $\varphi$ is termed \emph{normal} if it is lower semi-continuous relative to the ultraweak topology. Essentially, this means that for a positive operator $a$ in $M$, the values of $\varphi$ on neighboring elements are lower than $\varphi(a)$. Non-trivially, this means that there is a family of \textit{positive} forms $\phi_{i}$ which are ultraweakly continuous on $M$ such that $\varphi(a) = \sum_{i} \varphi_{i}(a)$. The weight is termed \emph{faithful} if $\varphi(x^*x) = 0 \iff x = 0$. Finally, the weight is termed \emph{semifinite} if $\mathfrak{n}_{\varphi}$ is ultraweakly dense in $M$. Note that because of the decomposition of a normal weight into positive forms, this definition can handle the potential infinite values of a semifinite weight $\varphi$. We shall denote the set of faithful, semifinite, normal weights on $M$ by $P(M)$. 

The GNS representation of \emph{any} faithful, semifinite, normal weight on a von Neumann algebra is standard in the sense defined by Haagerup \cite{haagerup1975standard}. For our purpose, the most significant insights from this observation are twofold. Firstly, the GNS Hilbert spaces of any faithful, semifinite, normal weights are isomorphic, $H_{\varphi} \simeq H_{\psi}, \; \forall \varphi,\psi \in P(M)$. Per this observation, we shall undertake the following analysis within the so-called canonical standard form, $L^2(M)$, which is unitarily equivalent to the GNS Hilbert space of any $\varphi \in P(M)$. We denote the standard representation on $L^2(M)$ by $\pi: M \rightarrow B(L^2(M))$. Secondly, for each state $\varphi \in P(M)$ there exists a vector representative $\xi_{\varphi} \in L^2(M)$ such that $\varphi(x) = g_{L^2(M)}(\xi_{\varphi}, \pi(x) \xi_{\varphi})$. These observations set the stage for the study of weights and states on von Neumann algebras, typically referred to as modular or Tomita-Takesaki theory \cite{takesaki1974tomita,takesaki2006tomita}. 

Given a pair of states $\varphi,\psi \in P(M)$ we define the relative Tomita operator $S_{\varphi \mid \psi}: L^2(M) \rightarrow L^2(M)$ such that
\beq \label{Tomita Operator}
	S_{\varphi \mid \psi}\bigg(\pi(x) \xi_{\psi} \bigg) \equiv \pi(x^*) \xi_{\varphi}. 
\eeq
In seminal work by Tomita (and later formalized by Takesaki) it is demonstrated that $S_{\varphi \mid \psi}$ is closeable and possesses a polar decomposition
\beq
	S_{\varphi \mid \psi} = J \Delta_{\varphi \mid \psi}^{1/2}.
\eeq
The operator $J$ is called the modular conjugation. The first fundamental result of Tomita-Takesaki theory is that $J$ intertwines the algebra $M$ and its commutant: $J \pi(M) J = \pi(M)'$. 

The operator $\Delta_{\varphi \mid \psi}$ is positive, non-singular, and self-adjoint and is called the relative modular operator. We also define by $\Delta_{\varphi} \equiv \Delta_{\varphi \mid \varphi}$ the modular operator of the state $\varphi \in P(M)$. The second fundamental result of Tomita-Takesaki theory is that $\Delta_{\varphi}$ generate an automorphism of the algebra $M$. In particular, to each $\varphi \in P(M)$ we associate the map
\beq
	\sigma^{\varphi}: \mathbb{R} \times M \rightarrow M, \; \sigma^{\varphi}_{t}(x) \equiv \pi^{-1}\bigg(\Delta_{\varphi}^{it}\pi(x)\Delta_{\varphi}^{-it}\bigg). 
\eeq
which is called the modular automorphism of $\varphi$. The modular automorphism possesses a more physically motivated characterization in terms of the KMS condition. In particular, the modular automorphism is the unique automorphism which 
\begin{enumerate}
	\item Preserves the state $\varphi$ as $\varphi \circ \sigma^{\varphi}_t = \varphi, \; \forall t \in \mathbb{R}$,
	\item For each $x,y \in M$ admits a bounded function $F_{x,y}$ on the closed horizontal strip $\overline{D} \subset \mathbb{C}$ which is holomorphic in the interior of $\overline{D}$ and satisfies
	\beq
		F_{x,y}(t) = \varphi(\sigma_t^{\varphi}(x)y), \qquad F_{x,y}(t + i) = \varphi(y \sigma^{\varphi}_t(x)). 
	\eeq
\end{enumerate}
The second of these two conditions characterizes when a quantum state corresponds to the thermal equilibirum of a physical system at inverse temperature $\beta = -1$ in which $\sigma^{\varphi}$ defines a thermal evolution \cite{kubo1957statistical, martin1959theory, haag1967equilibrium}. 

The modular automorphisms of any pair of states $\varphi,\psi \in P(M)$ are inner unitarily equivalent. In particular, there exists a one-parameter family of unitary operators $u^{\varphi \mid \psi}_t \in M$ such that
\beq
	\sigma^{\varphi}_t(x) = u^{\varphi \mid \psi}_t \sigma^{\psi}_t(x) u^{\varphi \mid \psi}_{-t}, \; \forall t \in \mathbb{R}. 
\eeq
The operators $u^{\varphi \mid \psi}_t$ are called the Connes' cocycle derivatives of $\varphi$ and $\psi$. They can be constructed using the relative modular operators as \cite{Connes1973,connes1973theoreme}:
\begin{flalign} \label{Connes Cocycle}
	u^{\varphi \mid \psi}_{t} &\equiv \pi^{-1}\bigg(\Delta_{\varphi \mid \psi}^{it} \Delta_{\psi}^{-it}\bigg) \nonumber \\
	&= \pi^{-1}\bigg(\Delta_{\varphi}^{it} \Delta_{\psi \mid \varphi}^{-it}\bigg), \; t \in \mathbb{R}.
\end{flalign}
Under analytic continuation, the Connes' cocycle derivatives can also be used to intertwine the states $\varphi$ and $\psi$:
\beq
	\varphi(x) = \psi(u^{\varphi \mid \psi}_{-i/2}{}^* \; x \; u^{\varphi \mid \psi}_{-i/2}). 
\eeq

\section{Crossed products with locally compact groups} \label{sec: CP}
In this section, we review crossed products with locally compact groups, following Ref.~~\cite{Takesaki2}. The building block is the notion of a dynamical system.
\begin{definition}
    A dynamical system is a triple $(M,G,\alpha)$ where $M$ is a von Neumann algebra\footnote{See \cite{Sorce:2023fdx} for a recent physical review of some elements of von Neumann algebras.}, $G$ is a locally compact group, and $\alpha : G \to \textup{Aut}(M)$ is an action of the group on the algebra via automorphisms. 
\end{definition}
Examples of such an object are algebras of observables of quantum field theories localized to subregions of spacetime~\cite{haag_local_1993} where $G \simeq \mathbb{R}_{\rm mod}$ is the modular automorphism group and $\alpha = \sigma^{\phi}$ is the modular action associated to some weight $\phi: M \to \mathbb{C}$.

To work in a standard quantum picture, we need to represent a dynamical system on a Hilbert space. 
For any dynamical system $(M, G, \alpha)$ with a given Hilbert space representation $\pi: M \rightarrow B(H)$, there exists a canonical extension of $(\pi,H)$ representing both the algebra $M$ and the group $G$ on an extended Hilbert space $L^2(G,H)\simeq H\otimes L^2(G)$ where the automorphism $\alpha$ is implemented unitarily. 
\begin{definition}
    The canonical covariant representation of a dynamical system $(M,G,\alpha)$ induced from the representation $\pi: M \rightarrow B(H)$ is a triple $(L^{2}(G,H),\pi_{\alpha}, \lambda)$ where $\pi_{\alpha}: M \to B(L^{2}(G,H))$ is a representation of $M$ on $L^{2}(G,H)$, and $\lambda: G \to U(L^{2}(G,H))$ is a unitary representation of $G$ on the space given by
    \begin{align}
    \pi_{\alpha}(x) \xi(g):= \pi \circ \alpha_{g^{-1}}(x) \xi(g), \quad \lambda(h) \xi(g):=  \xi(h^{-1}g) \nn
\end{align}
for any $\xi \in L^{2}(G,H)$, $g \in G$, and $x \in M$. The representation is covariant in the sense that the automorphism $\alpha$ is unitarily implemented: $\pi_{\alpha} \circ \alpha_g = \text{Ad}_{\lambda(g)} \circ \pi_{\alpha}$.
\end{definition}
The two representations combine into a new von Neumann algebra called the crossed product. 
\begin{definition}[Crossed product]
    Let $(L^{2}(G,H), \pi_{\alpha}, \lambda)$ be the canonical covariant representation of the dynamical system $(M, G,\alpha)$ induced from the representation $\pi : M \to B(H)$. The crossed product of $M$ with $G$ relative to the automorphism $\alpha$ is the von Neumann algebra 
    \begin{equation} \label{crossed product def}
    M \rtimes_{\alpha} G := \{ \pi_{\alpha}(M), \lambda(G) \}'',
\end{equation}
where the double-prime denotes weak closure in $L^{2}(G,H)$. 
\end{definition}

One useful way of understanding the crossed product is through the generalization of Takesaki duality which we refer to as the commutation theorem~\cite{takesaki1973duality,  Takai1975,1977727,1977CMaPh..52..191L, ImaiTakai1978,Pedersen1979,92a701b6-511a-3206-a6c2-c00a4d6f7655, Raeburn1987,Raeburn1988}.
\begin{theorem}[Commutation theorem] \label{commutation}
  Let $\tilde{M} = M \otimes B(L^{2}(G))$ where $M$ is part of a dynamical system $(M,G,\alpha)$ and define by $\tilde{\alpha} := \alpha \otimes \textup{Ad} \lambda_{r}$ an extension of the automorphism $\alpha$ on $M$ to $\tilde{M}$ with $\lambda_{r}$, the right regular representation of $G$. Then, the fixed-point subalgebra of $\tilde{M}$ under $\tilde{\alpha}$ is isomorphic to $M \rtimes_{\alpha} G$.
\end{theorem}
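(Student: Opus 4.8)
The plan is to prove the sharper statement that, realized on the common Hilbert space $L^{2}(G,H) \simeq H \otimes L^{2}(G)$, the fixed-point algebra $\tilde{M}^{\tilde{\alpha}}$ and the crossed product $M \rtimes_{\alpha} G$ coincide as von Neumann subalgebras of $B(L^{2}(G,H))$, which in particular yields the claimed isomorphism. For the easy inclusion $M \rtimes_{\alpha} G \subseteq \tilde{M}^{\tilde{\alpha}}$ I would first note that the generators lie in $\tilde{M}$: the operator $\pi_{\alpha}(x)$ acts fiberwise as the decomposable multiplier $g \mapsto \alpha_{g^{-1}}(x)$, hence lies in $M\,\overline{\otimes}\,L^{\infty}(G) \subset M \otimes B(L^{2}(G)) = \tilde{M}$, while $\lambda(h) = 1 \otimes \lambda(h) \in 1 \otimes B(L^{2}(G)) \subset \tilde{M}$. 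I then check that both generators are $\tilde{\alpha}$-fixed. For $\lambda(h)$ this is immediate from the commutation of the left and right regular representations, $[\lambda(h),\lambda_{r}(k)]=0$, giving $\tilde{\alpha}_{k}(\lambda(h)) = 1 \otimes \lambda_{r}(k)\lambda(h)\lambda_{r}(k)^{*} = \lambda(h)$. For $\pi_{\alpha}(x)$ a short computation shows that $\text{Ad}\,\lambda_{r}(k)$ sends the multiplier $g \mapsto \alpha_{g^{-1}}(x)$ to $g \mapsto \alpha_{(gk)^{-1}}(x)$, and applying $\alpha_{k}$ fiberwise restores $g \mapsto \alpha_{k}\alpha_{k^{-1}}\alpha_{g^{-1}}(x) = \alpha_{g^{-1}}(x)$, so $\tilde{\alpha}_{k}(\pi_{\alpha}(x)) = \pi_{\alpha}(x)$. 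Since each $\tilde{\alpha}_{k}$ is a normal automorphism, its fixed-point set is weakly closed and hence contains $\{\pi_{\alpha}(M),\lambda(G)\}'' = M \rtimes_{\alpha} G$.

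The conceptual heart of the reverse inclusion $\tilde{M}^{\tilde{\alpha}} \subseteq M \rtimes_{\alpha} G$ is a covariance characterization of both algebras in terms of operator kernels. Writing $T \in \tilde{M}$ formally through an $M$-valued kernel $k(g,g')$ via $(T\xi)(g) = \int_{G} k(g,g')\xi(g')\,dg'$, one computes that $\tilde{\alpha}_{k}(T)$ has kernel $(g,g') \mapsto \alpha_{k}\big(k(gk,g'k)\big)$, so the fixed-point condition is $\alpha_{k}\big(k(gk,g'k)\big) = k(g,g')$ for all $k$. This is exactly the statement that $\tilde{k}(g,g') := \alpha_{g}\big(k(g,g')\big)$ is right-translation invariant, i.e. a function of $gg'^{-1}$ alone, which is precisely the condition characterizing kernels of the form $k(g,g') = \alpha_{g^{-1}}\big(\kappa(gg'^{-1})\big)$ produced by the generating products $\pi_{\alpha}(x)\lambda(h)$ — that is, the elements of $M \rtimes_{\alpha} G$. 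Thus at the level of kernels the two algebras are cut out by one and the same covariance condition.

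To make this rigorous I would discard the formal kernel language by passing to the standard form $H = L^{2}(M)$, which is permissible since the crossed product is independent of the faithful representation. In the standard form every $\alpha_{k}$ is spatially implemented by a canonical unitary $u_{k}$, so $\tilde{\alpha}$ becomes inner, implemented by $U_{k} := u_{k} \otimes \lambda_{r}(k)$; then $\tilde{M}^{\tilde{\alpha}} = \tilde{M} \cap \{U_{k} : k \in G\}'$, and the covariance computation above becomes a genuine commutant calculation identifying this intersection with $\{\pi_{\alpha}(M),\lambda(G)\}''$. The main obstacle is exactly this reverse inclusion: for a \emph{general} representation $\alpha$ need not be spatially implemented, so the convenient identity ``fixed points $=$ commutant of the implementers'' is unavailable, and one must either reduce to the standard form as above or construct the faithful normal operator-valued weight averaging $\tilde{M}$ onto $\tilde{M}^{\tilde{\alpha}}$ and show its range is generated by $\pi_{\alpha}(M)$ and $\lambda(G)$. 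I would also carry the modular function $\Delta_{G}^{1/2}$ through the regular representations to cover the non-unimodular case, and invoke the established generalizations of Takesaki duality cited in the text to discharge the remaining measure-theoretic technicalities.
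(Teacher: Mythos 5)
Your first inclusion, $M \rtimes_{\alpha} G \subseteq \tilde{M}^{\tilde{\alpha}}$, is correct and complete: $\pi_{\alpha}(x)$ is a decomposable operator lying in $M \otimes L^{\infty}(G)$, $\lambda(h) = 1 \otimes \lambda_h$ commutes with $1 \otimes \lambda_{r}(k)$, the modular factor in $\lambda_{r}(k)$ cancels under $\mathrm{Ad}$ acting on decomposable operators, and the fixed-point set of a family of normal automorphisms is a von Neumann algebra, hence contains $\{\pi_{\alpha}(M),\lambda(G)\}''$. (For calibration: the paper itself does not prove Theorem~\ref{commutation}; it quotes it from the cited literature, so your attempt can only be measured against the standard proofs found there.)

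The genuine gap is the reverse inclusion, which is where the entire content of the theorem sits, and your proposal does not prove it. The kernel computation is a correct heuristic, but it cannot be promoted to a proof along the route you describe: a general element of $\tilde{M}^{\tilde{\alpha}}$ is not an integral operator with an $M$-valued kernel, and the class of kernel operators is not stable under the weak limits that define both algebras, so the observation that "one covariance condition cuts out both algebras" is established only on a set that does not control either side. Your standard-form reduction is fine as far as it goes --- in standard form $\alpha_k = \mathrm{Ad}(u_k)$ canonically, so $\tilde{M}^{\tilde{\alpha}} = \tilde{M} \cap \{u_k \otimes \lambda_{r}(k) : k \in G\}'$ --- but the step you then call ``a genuine commutant calculation'' is not a calculation: taking commutants, it is equivalent to the identity
\[
(M \rtimes_{\alpha} G)' \;=\; \bigl(M' \otimes \mathbb{C}1\bigr) \vee \{u_k \otimes \lambda_{r}(k) : k \in G\}'',
\]
which is precisely the commutation theorem for crossed products due to Digernes (separable case) and Haagerup (general case) --- the result that gives the theorem its name, and exactly as hard as the statement you set out to prove. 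Nothing in the kernel heuristic addresses it, because the commutant must be computed against all of $B(L^{2}(G,H))$, not against kernel operators. The same objection applies to your alternative route through the averaging map $x \mapsto \int_{G} \tilde{\alpha}_k(x)\,d\mu(k)$: for non-compact $G$ this is only an operator-valued weight into the extended positive part of the fixed-point algebra, and identifying its range structure with $\{\pi_{\alpha}(M),\lambda(G)\}''$ is the content of Haagerup's dual-weight papers, not a routine verification. Finally, your closing plan to ``invoke the established generalizations of Takesaki duality cited in the text to discharge the remaining technicalities'' is circular: those citations \emph{are} this theorem. To close the argument you would have to actually run one of the known proofs --- Takesaki's Fourier-transform argument for abelian $G$, Digernes' direct-integral argument under separability, or Haagerup's operator-valued-weight argument in general.
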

The commutation theorem allows us to understand the crossed product as an algebra respecting particular constraints imposed by the group $G$. 
This construction can be applied to general gauge theories~\cite{Klinger:2023tgi,Klinger:2023auu} in which case the crossed product is the gauge-invariant algebra of observables. In \cite{Klinger:2023auu} this result has been directly related to more conventional approaches to constraint quantization such as refined algebraic quantization (RAQ) \cite{Giulini:1998kf,Giulini:1998rk,Marolf:2000iq,Ashtekar:1995zh,Marolf:1996gb}, the Becchi-Rouet-Stora-Tyutin~(BRST) method \cite{Becchi:1974xu,Becchi:1974md,Becchi:1975nq,Tyutin:1975qk,Batalin:1981jr,Henneaux:1992ig,Barnich:2000zw,Fuster:2005eg,Jia:2023tki}, and the Faddeev-Popov approach in path integral quantization \cite{Faddeev:1967fc,Stora:1996ip,Cordes:1994fc}. Briefly, the group generators included into the algebra via Eq.~\eqref{crossed product def} furnish apparently gauge non-invariant degrees of freedom which are nonetheless necessary to achieve gauge invariance in the full theory. In RAQ, these degrees of freedom facilitate the group averaging of operators, in BRST they equip the algebra with a representation of the appropriate group cohomology, and in the Faddeev-Popov approach they are used to construct an insertion which localizes the path integral to the constraint surface. For different perspectives on the crossed product in the context of quantum references frames and constraint quantization, see~\cite{fewster_algebraic_2019, DeVuyst:2024pop, AliAhmad:2024wja}.
\section{Semifiniteness of crossed products}
Here, we review dual weights as necessary to prove our main result. In the following $(M,G,\alpha)$ is a dynamical system and $(L^2(G,H),\pi_{\alpha},\lambda)$ is the canonical covariant representation induced from the representation $\pi: M \rightarrow B(H)$. We let $\omega \in M_*$ be a faithful, semifinite, normal weight on the original algebra $M$. We denote by $\Delta_{g,h}: H \rightarrow H$ the relative modular operator and by $u^{g,h}_t \equiv \Delta_{g,h}^{it}\Delta_{h,h}^{-it}\in M$ the Connes cocycle of the weights $\omega_g$ and $\omega_h$. The modular operator of the untranslated weight $\omega$ is given by $\Delta \equiv \Delta_{e,e}$. 

Dual weights are a natural way to associate the weight $\omega$ with a faithful, semifinite, normal weight $\tilde{\omega} \in (M \rtimes_{\alpha} G)_*$ whose modular data is encoded entirely in the modular data of $\omega$ and its $G$-translates. 
\begin{definition}[Haagerup's algebra]
Given a dynamical system $(M,G,\alpha)$ the algebra $M_G$ consists of continuous compactly supported maps from $G$ to $M$. $M_G$ is a $*$-algebra equipped with a product $\star$ 
\begin{equation}
	\big(\mathfrak{X} \star \mathfrak{Y}\big)(g) \equiv \int_{G} \mu(h) \; \alpha_h\big(\mathfrak{X}(gh)\big) \mathfrak{Y}(h^{-1}), \nonumber
\end{equation}
where $\mu(h)$ is the left-invariant Haar measure on $G$, and involution $\flat$ 
\begin{equation}
 \mathfrak{X}^\flat(g) \equiv \delta(g^{-1}) \alpha_{g^{-1}}\big(\mathfrak{X}(g^{-1})\big)^*.  \nn
 \end{equation}
 Here, $\delta: G \to \mathbb{C}$ is the module function of the group $G$ measuring the failure of right invariance of the left Haar measure $\mu$. Namely, $\mu(gh) = \delta(h) \mu(g)$ for $g,h \in G$.
\end{definition}
The algebra $M_G$ can be interpreted as encoding the orbit data of operators in $M$ under the action $\alpha$. It can also be regarded as providing an alternative construction of the crossed product. The covariant representation $(L^2(G,H),\pi_{\alpha},\lambda)$ induces a *-representation $\rho: M_G \rightarrow B(L^2(G,H))$ given by
\begin{equation}
	\rho(\mathfrak{X}) \equiv \int_{G} \mu(g) \; \lambda(g) \pi_{\alpha}(\mathfrak{X}(g)). 
\end{equation}
The algebra $\rho(M_G) \subset B(L^2(G,H))$ is weakly dense in the crossed product $M \rtimes_{\alpha} G$ and thus $\rho(M_G)'' = M \rtimes_{\alpha} G$.  

Having defined the algebra $M_G$\footnote{We refer the reader to Appendix~\ref{app: hf} for more detail.}, we can now state the definition of the dual weight:
\begin{definition}[Dual weight]
Given a faithful, semifinite, normal weight $\phi \in M_*$ there exists a unique faithful, semifinite, normal weight $\tilde{\phi} \in (M \rtimes_{\alpha} G)_*$ satisfying:
\begin{itemize}
	\item For all\footnote{To be completely rigorous, this should be stated in terms of the product of $M_G$ with the ideal $\mathfrak{n}_\phi \equiv \{x \in M \; | \; \phi(x^*x) < \infty\}$.} $\mathfrak{X} \in M_G$, 
	\begin{equation} \label{Value of Dual weights}
		\tilde{\phi}\bigg(\rho(\mathfrak{X}^\flat \star \mathfrak{X})\bigg) = \phi\bigg((\mathfrak{X}^\flat \star \mathfrak{X})(e)\bigg). 
	\end{equation}
	\item For all $\xi \in L^2(G,H)$ the dual modular operator, namely the modular operator of the weight $\tilde{\phi}$, $\tilde{\Delta}: L^2(G,H) \rightarrow L^2(G,H)$ is given by
\begin{equation} \label{dual modular operator}
    \bigg(\tilde{\Delta}^{it}\xi\bigg)(g) = \delta(g)^{it} \pi(u^{g,e}_t) \Delta^{it} \bigg(\xi(g)\bigg).
\end{equation}
\end{itemize}
\end{definition}
The construction of the dual weight is due originally to Digernes \cite{Digernes:1975} for the separable case\footnote{In particular, Digernes' assumed that the group $G$ and the predual $M_*$ are separable. Recall that a topological space is termed separable if it contains a countable, dense subset.} and was later refined by Haagerup \cite{haagerup1978dualI,haagerup1978dualII} for the general case. 

Before stating our results, we make one final definition.
\begin{definition}[Quasi-invariance]
    Let $(M,G,\alpha)$ be a dynamical system and $\omega$ be a faithful, semifinite, normal weight on $M$. If $\omega \circ \alpha_{g} = \delta(g)^{-1} \omega$, we call $\omega$ quasi $\alpha$-invariant. 
\end{definition}
While the condition $\omega \circ \alpha_{g} = \delta(g)^{-1} \omega$ may seem physically unnatural at first sight, we see that in the typical case of a unimodular group, this reduces to requiring the $\alpha$-invariance of $\omega$. Moreover, it could be interpreted as the suitable notion of $\alpha$-invariance for non-unimodular groups~\cite{Duval1990,Duval1991,Giulini1999, Marolf:2000iq}. The dual modular operator of a quasi $\alpha$-invariant weight $\omega$ can be simplified by observing $u^{g, e}_t = \delta(g)^{-it} \mathbb{1}$, which implies $\tilde{\Delta} = \Delta \otimes \mathbb{1}$. The form of this dual modular operator motivates the following theorem:
\begin{theorem}[Semifiniteness]
\label{thm:main}
      Let $(M,G,\alpha)$ be a dynamical system with $(\mathbb{R},+) < G$ a subgroup. We denote by $\gamma: \mathbb{R} \hookrightarrow G$ the inclusion of this subgroup. If (i) there exists a faithful, semifinite, normal weight, $\omega \in M_*$ that is quasi $\alpha$-invariant and (ii) the weight $\omega$ is KMS with respect to the action of $\mathbb{R} < G$, then $M \rtimes_{\alpha} G$ is semifinite.
\end{theorem}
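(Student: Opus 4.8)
The plan is to exhibit a faithful, semifinite, normal trace on $M \rtimes_{\alpha} G$ by showing that the dual weight $\tilde{\omega}$ has an \emph{inner} modular automorphism group and then correcting it by a Radon--Nikodym factor. Recall the standard criterion that a von Neumann algebra is semifinite precisely when the modular automorphism group of some (equivalently, any) faithful, semifinite, normal weight is implemented by a one-parameter unitary group lying in the algebra itself; the trace is then obtained by perturbing the weight by the generator of that group. Since $\omega$ is quasi $\alpha$-invariant, the Connes cocycle simplifies to $u^{g,e}_t = \delta(g)^{-it}\mathbb{1}$ and the dual modular operator collapses to $\tilde{\Delta} = \Delta \otimes \mathbb{1}$, so that $\sigma^{\tilde{\omega}}_t = \mathrm{Ad}(\tilde{\Delta}^{it})$ is governed entirely by the original modular operator $\Delta$. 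This is the data I would exploit.

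First I would extract the centrality of the modular flow directly from the hypotheses. Using that rescaling a weight leaves its modular automorphism unchanged, $\sigma^{c\omega} = \sigma^{\omega}$ for $c>0$, together with the covariance identity $\sigma^{\omega \circ \alpha_g}_t = \alpha_g^{-1} \circ \sigma^{\omega}_t \circ \alpha_g$, quasi-invariance (i) gives $\sigma^{\omega}_t = \sigma^{\delta(g)^{-1}\omega}_t = \sigma^{\omega\circ\alpha_g}_t = \alpha_g^{-1}\circ\sigma^\omega_t\circ\alpha_g$, i.e. $\sigma^{\omega}_t$ commutes with every $\alpha_g$. The KMS hypothesis (ii) identifies $\sigma^{\omega}_t = \alpha_{\gamma(t)}$, whence $\alpha_{\gamma(t)}\circ\alpha_g = \alpha_g\circ\alpha_{\gamma(t)}$ for all $g \in G$ and $t \in \mathbb{R}$. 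Read through the (faithful) action $\alpha$, this is exactly the statement that $\gamma(\mathbb{R})$ is central in $G$ --- the surprising conclusion advertised in the introduction.

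Next I would verify that $\sigma^{\tilde{\omega}}$ is inner by comparing it, generator by generator, with $\mathrm{Ad}(\lambda(\gamma(t)))$. A direct computation on $L^2(G,H)$ shows that $\tilde{\Delta}^{it} = \Delta^{it}\otimes\mathbb{1}$ commutes with every $\lambda(h)$ and acts on the algebra as $\sigma^{\tilde{\omega}}_t(\pi_{\alpha}(x)) = \pi_{\alpha}(\sigma^{\omega}_t(x)) = \pi_{\alpha}(\alpha_{\gamma(t)}(x))$, where the first equality uses $\Delta^{it}\pi(y)\Delta^{-it} = \pi(\sigma^\omega_t(y))$ and the reordering $\alpha_{\gamma(t)}\circ\alpha_{g^{-1}} = \alpha_{g^{-1}}\circ\alpha_{\gamma(t)}$ from the previous step. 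On the other hand the covariance relation $\pi_{\alpha}\circ\alpha_g = \mathrm{Ad}_{\lambda(g)}\circ\pi_{\alpha}$ gives $\mathrm{Ad}(\lambda(\gamma(t)))(\pi_{\alpha}(x)) = \pi_{\alpha}(\alpha_{\gamma(t)}(x))$, matching on $\pi_{\alpha}(M)$, while on $\lambda(G)$ centrality gives $\mathrm{Ad}(\lambda(\gamma(t)))(\lambda(h)) = \lambda(\gamma(t)h\gamma(t)^{-1}) = \lambda(h)$, matching the fact that $\tilde{\Delta}^{it}$ fixes $\lambda(h)$. Since the two automorphisms agree on the generating set $\{\pi_{\alpha}(M),\lambda(G)\}$, they agree on $M\rtimes_{\alpha}G$, and because $\lambda\circ\gamma$ is a one-parameter unitary group inside the crossed product, $\sigma^{\tilde{\omega}}$ is inner.

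Finally I would build the trace. By Stone's theorem write $\lambda(\gamma(t)) = e^{itK}$ with $K$ a self-adjoint operator affiliated to $M\rtimes_{\alpha}G$, and set $\tau(x) \equiv \tilde{\omega}(e^{-K/2}\,x\,e^{-K/2})$. The Pedersen--Takesaki Radon--Nikodym calculus shows that this perturbed weight has Connes cocycle $(D\tau:D\tilde{\omega})_t = e^{-itK} = \lambda(\gamma(t))^*$, which exactly cancels $\sigma^{\tilde{\omega}}_t = \mathrm{Ad}(e^{itK})$, so that $\sigma^{\tau}_t = \mathrm{id}$ and $\tau$ is a faithful, semifinite, normal trace; this is the desired certificate of semifiniteness, and it visibly reduces to Takesaki's expression when $G = \mathbb{R}$ and $\alpha = \sigma^{\omega}$. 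I expect the main obstacle to be precisely the $\lambda(G)$-sector of the generator-by-generator comparison: it is here, and only here, that one genuinely needs $\gamma(\mathbb{R})$ to be central rather than merely to commute with $\sigma^\omega$, so the heart of the argument is establishing --- and legitimately using --- the centrality extracted in the second step, together with controlling the unbounded operator $K$ and its domains in the Radon--Nikodym step so that $\tau$ is honestly well defined.
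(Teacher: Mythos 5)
Your proposal is correct and takes essentially the same route as the paper: KMS uniqueness gives $\sigma^{\omega}_t=\alpha_{\gamma(t)}$, quasi $\alpha$-invariance gives $\tilde{\Delta}=\Delta\otimes\mathbb{1}$ together with commutation of $\sigma^{\omega}$ with $\alpha$ (hence centrality of $\gamma(\mathbb{R})$), your generator-by-generator identification $\sigma^{\tilde{\omega}}_t=\mathrm{Ad}\big(\lambda(\gamma(t))\big)$ is precisely the content of the paper's Lemma~\ref{central implies inner}, and semifiniteness then follows from the same inner-implementation criterion (Theorem 3.14 of Ref.~\cite{Takesaki2}) invoked in the paper. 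Your closing Pedersen--Takesaki construction of the trace mirrors the paper's Lemma~\ref{lemma:trace} and Appendix~\ref{app: trace}; the only presentational difference is that you verify innerness by direct computation with $\tilde{\Delta}^{it}=\Delta^{it}\otimes\mathbb{1}$ rather than by routing through the commutation theorem.
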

\begin{proof}
    By the uniqueness of the KMS condition, assumption (ii) implies that the modular automorphism $\sigma_t = \alpha_{\gamma(t)}$.
    
    Assumption (i) implies that the dual modular automorphism $\tilde{\sigma}$ is implemented by $\sigma \otimes \mathbb{1}$. It moreover implies by $\sigma^{\omega_g} = \alpha_{g}^{-1} \circ \sigma^{\omega} \circ \alpha_{g}$ that the modular flow $\sigma$ commutes with the automorphism $\alpha$~\cite{herman1970states}. In this case, $\sigma\otimes \mathbb{1}$ is inner implemented in the crossed product $M \rtimes_{\alpha} G$ as proved in Lemma~\ref{central implies inner} of Appendix~\ref{ap: inner}. Then, $\tilde{\sigma}$ is also inner implemented. 
   
   A von Neumann algebra is semifinite if and only if it admits a semifinite normal weight for which its modular automorphism group is innerly implemented~(c.f.~Theorem 3.14 in Ref.~\cite{Takesaki2}), which proves our result.
\end{proof}
A few remarks are in order:
\begin{enumerate}
     \item An advantage of quasi $\alpha$-invariance is that one does not need to assume that the group von Neumann algebra $\mathcal{L}(G)$ is semifinite to obtain the semifiniteness of $M \rtimes_{\alpha} G$. If one had instead required $G$-invariance of $\omega$, the dual modular automorphism would have included a factor of the module function,
    which then requires innerness of the adjoint action of the multiplication operator induced by the module function. This is implied by the semifiniteness of $\mathcal{L}(G)$ in Appendix~\ref{app: Plancherel}, where it is shown that this operator is precisely the modular operator for the Plancherel weight of the group von Neumann algebra.
    \item  The assumptions of our Theorem~\ref{thm:main} imply that $\mathbb{R}< G$ is a central subgroup. The centrality of $\mathbb{R}$ seems to be a necessary condition in some cases as we record in the following theorem: 
    \begin{theorem}[Centrality]\label{thm:iff}
      Let $(M,G,\alpha)$ be a dynamical system with $M$ a type III algebra. Suppose that $M$ admits a faithful, semifinite, normal weight $\omega$ which is KMS with respect to the action of a subgroup $\mathbb{R} < G$. Moreover, assume that the relative commutant $\pi_{\alpha}(M)^{c} \equiv \pi_{\alpha}(M)' \cap (M \rtimes_{\alpha} G)$ is trivial. Then, the crossed product $M \rtimes_{\alpha} G$ is semifinite if and only if $\omega$ is quasi $\alpha$-invariant and so $\gamma(\mathbb{R})$ is a central subgroup.
    \end{theorem}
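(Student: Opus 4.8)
The plan is to treat the two implications separately. The reverse direction is immediate: if $\omega$ is quasi $\alpha$-invariant then, together with the standing KMS hypothesis, both assumptions of Theorem~\ref{thm:main} are met, so $M \rtimes_\alpha G$ is semifinite. All the real content is in the forward direction, for which I would invoke the criterion that a von Neumann algebra is semifinite if and only if the modular automorphism group of one (hence every) faithful, semifinite, normal weight is inner. Applied to the dual weight $\tilde{\omega}$, semifiniteness of $M \rtimes_\alpha G$ means its dual modular automorphism $\tilde{\sigma}$ is inner, say $\tilde{\sigma}_t = \mathrm{Ad}(v_t)$ with $v_t \in M \rtimes_\alpha G$; one may take $v_t$ to be a genuine one-parameter unitary group, since a semifinite algebra carries a trace $\tau$ with $\tilde{\omega} = \tau(h\,\cdot\,)$ and $v_t = h^{it}$.

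The key reduction is to pin down $v_t$ using the triviality of the relative commutant. First I would record the standard restriction property of the dual weight, $\tilde{\sigma}_t(\pi_\alpha(x)) = \pi_\alpha(\sigma_t(x))$ for $x \in M$, which follows from the explicit dual modular operator \eqref{dual modular operator} together with the cocycle relation $\sigma^{\omega_g}_t = \mathrm{Ad}(u^{g,e}_t)\circ \sigma_t$. By the KMS hypothesis $\sigma_t = \alpha_{\gamma(t)}$, and by covariance this automorphism is implemented on $\pi_\alpha(M)$ by $\lambda(\gamma(t))$. Thus $v_t$ and $\lambda(\gamma(t))$ implement the same automorphism of $\pi_\alpha(M)$, so $v_t^* \lambda(\gamma(t))$ lies in $\pi_\alpha(M)' \cap (M \rtimes_\alpha G) = \pi_\alpha(M)^{c}$, which is trivial by hypothesis. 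Hence $v_t$ agrees with $\lambda(\gamma(t))$ up to a phase, and $\tilde{\sigma}_t = \mathrm{Ad}(\lambda(\gamma(t)))$ on all of $M \rtimes_\alpha G$.

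The final step extracts both conclusions by evaluating $\tilde{\sigma}_t$ on the generators $\lambda(h)$, $h \in G$, in two ways. On one hand $\mathrm{Ad}(\lambda(\gamma(t)))$ gives the pure translation $\lambda(\gamma(t)\,h\,\gamma(t)^{-1})$. On the other hand, the explicit dual modular operator gives $\tilde{\sigma}_t(\lambda(h)) = M_{B}\,\lambda(h)$, a translation by $h$ dressed by the pointwise (in $g$) multiplication operator $B(g) = \delta(h)^{it}\,\pi\big(u^{g,e}_t\,\sigma_t(u^{h^{-1}g,e}_{-t})\big)$. Equating a pure left translation with a dressed one forces the two translations to coincide, i.e. $\gamma(t)\,h\,\gamma(t)^{-1} = h$ for all $h$, which is exactly centrality of $\gamma(\mathbb{R})$, and simultaneously forces $B(g) = \mathbb{1}$. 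Evaluating the latter at $g = h$ collapses the cocycle to $u^{h,e}_t = \delta(h)^{-it}\,\mathbb{1}$, which is precisely quasi $\alpha$-invariance of $\omega$.

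The hard part will be the honest verification of the two explicit identities for $\tilde{\sigma}_t$ on $\pi_\alpha(M)$ and on $\lambda(G)$ from \eqref{dual modular operator}, carefully tracking the module function and the Connes cocycle, and then the rigor of the matching argument: one must show that an equality of the form $M_B\,\lambda(h) = \lambda(h')$ on $L^2(G,H)$ forces both $h = h'$ and $B = \mathbb{1}$, by testing against functions localized near distinct sampling points and using that $B$ is unitary-valued. The type III hypothesis enters to ensure $\sigma$ is genuinely outer on $M$, so the centrality constraint is nonvacuous and the equivalence is sharp, while the triviality of $\pi_\alpha(M)^{c}$ is what upgrades the comparison of implementers into the rigidity $v_t \propto \lambda(\gamma(t))$ that drives the entire argument.
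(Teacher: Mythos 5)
Your proposal is correct and follows essentially the same route as the paper: the reverse direction via Theorem~\ref{thm:main}, and the forward direction by using relative-commutant triviality to show $\lambda(\gamma(t))$ is the unique inner implementer of $\tilde{\sigma}$ (the paper's Appendix~\ref{app: central}), then equating $\mathrm{Ad}_{\lambda(\gamma(t))}$ with the explicit dual modular action on $\lambda(h)$ (the paper's Eq.~\eqref{dual on group}) to force centrality and triviality of the cocycle, hence quasi $\alpha$-invariance. Your only departures are cosmetic: you keep the dressed operator $B(g)$ in unsimplified form and evaluate at $g=h$ rather than using the paper's cocycle-simplified expression, and your localized-support justification of the matching step is if anything more careful than the paper's ``distinct generators'' assertion.
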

    \begin{proof}
        Suppose $M \rtimes_{\alpha} G$ is semifinite, which means that the modular automorphism of any faithful, semifinite, and normal weight is inner implemented~\cite{Takesaki2}. In particular, this must hold for $\tilde{\omega}$, the dual weight of $\omega$. In Appendix~\ref{app: central}, we show that under the assumption that $\pi_{\alpha}(M)^{c} = \mathbb{C}$, $\text{Ad}_{\lambda(\gamma(t))}$ is the unique inner implementer of the dual modular automorphism $\tilde{\sigma}$ in $M \rtimes_{\alpha} G$. Now, compare the action of $\text{Ad}_{\lambda(\gamma(t))}$ and $\tilde{\sigma}_{t}$ on an element $\lambda(g)$. The former simply gives rise to $\lambda(\gamma(t)g\gamma(t)^{-1})$, while the latter is given by Eq.~\eqref{dual on group}. Forcing them to be equal implies that
\begin{equation}
 \lambda(g^{-1}\gamma(t)g\gamma(t)^{-1}) = \delta(g)^{it}\pi_{\alpha}[u_{t}^{g|e}]. \nn
\end{equation}
The operators on either side of the above equation are \textit{distinct} generators of the crossed product and so cannot be equal unless they are proportional to the identity. The only way for the operator on the left to be trivial is if $\gamma(\mathbb{R})$ is central in $G$. Finally, the triviality of the operator on the right implies that $\omega \circ \alpha_{g} = \delta(g)^{-1} \omega$ since the cocycle is proportional to the identity if and only if the two weights are proportional. In other words, the weight $\omega$ is quasi $\alpha$-invariant. 

    The reverse direction is proved by our Theorem~\ref{thm:main}.
    \end{proof} 
    \item  Once it has been established that the crossed product is semifinite under the assumptions of Theorem~\ref{thm:main}, it is manifest that the modular flow of $\tilde{\psi} \in (M \rtimes_{\alpha} G)_*$ is inner implemented for every $\psi \in M_*$. If the weight $\psi$ is not quasi $\alpha$-invariant, its modular flow may be implemented by a non-central subgroup of $G$ since the action of its modular group is related to that of $\omega$ by conjugation with cocycles. This conjugation can twist the embedding of $\mathbb{R}$ in $G$, rendering the subgroup non-central. This should be contrasted with Theorem~\ref{thm:iff} which further assumes the triviality of $\pi_\alpha(M)^c$ and therefore forces the modular flow of any KMS weight relative to $\alpha|_{\mathbb{R}}$ to be central.
\end{enumerate}
\section{Constructing traces}
A semifinite von Neumann algebra admits a tracial weight.
In Theorem~\ref{thm:main} and Appendix~\ref{ap: inner} we show that the action of the dual modular flow is innerly implemented by a unitary operator $A^{it}$. In particular,
\begin{equation}
    \tilde{\sigma}_t = \text{Ad}_{A^{it}}, \quad A^{it} = \mathbb{1} \otimes \lambda\big(\gamma(t)\big),
\end{equation}
The operator $A$ is self-adjoint, and its existence is guaranteed by the spectral theorem. Together with the dual weight $\tilde{\omega}$, it allows us to construct a semifinite trace on the crossed product\footnote{Unless $M\rtimes_\alpha G$ is a factor, there will exist multiple inequivalent traces. If it is a factor, there will exist a unique trace up to rescaling. The question of when a general crossed product $M \rtimes_{\alpha} G$ is a factor is open, but a typical sufficient condition (which is supplemented with additional conditions) relies on the outerness of the action $\alpha$. To see why, recall that the crossed product with an inner action is isomorphic to a tensor product algebra, which will have a center.}.
\begin{lemma} \label{lemma:trace}
Let $(M,G,\alpha)$ be a dynamical system satisfying the requirements in Theorem~\ref{thm:main}. The linear map $\tau:M\rtimes_\alpha G \to \mathbb{C}$ given by
\begin{align}
\label{eq:trace}
    \tau(X) &\equiv \tilde{\omega}(A^{-1}X) 
\end{align}
is a faithful semifinite normal tracial weight on $M\rtimes_\alpha G$. 
\end{lemma}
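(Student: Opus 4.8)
The plan is to recognize $\tau$ as a Radon--Nikodym perturbation of the dual weight $\tilde{\omega}$ by the positive operator $A^{-1}$, and to show that this perturbation trivializes the modular automorphism group, so that $\tau$ is automatically tracial. Faithfulness, normality, and semifiniteness will be inherited from $\tilde{\omega}$, leaving the tracial property as the only substantive claim.

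First I would pin down the operator $A$. By Theorem~\ref{thm:main} and Appendix~\ref{ap: inner}, the modular flow of $\tilde{\omega}$ is inner implemented by the strongly continuous one-parameter unitary group $A^{it} = \mathbb{1} \otimes \lambda(\gamma(t))$, which lies in $M \rtimes_{\alpha} G$ because $\lambda(G)$ does. Stone's theorem then produces a positive, nonsingular, self-adjoint $A$ affiliated to $M \rtimes_{\alpha} G$ (its spectral projections lie in the algebra) with these imaginary powers. The crucial structural fact is that $A$ is fixed by its own modular flow, $\sigma^{\tilde{\omega}}_t(A) = A^{it} A A^{-it} = A$, so that $A$ and $A^{-1}$ commute with $\sigma^{\tilde{\omega}}$. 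This is exactly the condition ensuring that $A^{-it}$ is a genuine $\sigma^{\tilde{\omega}}$-cocycle, which is what allows the perturbed weight to be defined.

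With this in hand I would give rigorous meaning to the formal expression $\tau(X) = \tilde{\omega}(A^{-1}X)$. Since $A^{-1}$ is in general unbounded, $A^{-1}X$ need not belong to $M \rtimes_{\alpha} G$, so $\tau$ cannot be defined by naive substitution; instead I invoke the Pedersen--Takesaki Radon--Nikodym theorem, which associates to the faithful semifinite normal weight $\tilde{\omega}$ and the nonsingular positive affiliated operator $A^{-1}$ a unique faithful semifinite normal weight $\tau$ characterized by the Connes cocycle $(D\tau : D\tilde{\omega})_t = A^{-it}$ (this is the standard spatial interpretation of the insertion $\tilde{\omega}(A^{-1}\,\cdot\,)$). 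Faithfulness, normality, and semifiniteness of $\tau$ follow from those of $\tilde{\omega}$ together with the nonsingularity of $A^{-1}$. This translation of a formal unbounded insertion into a bona fide weight is the step requiring the most care, and is the main obstacle of the argument.

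The trace property then follows from the chain rule for modular automorphisms under perturbation. Because $A^{-1}$ commutes with $\sigma^{\tilde{\omega}}$,
\begin{equation}
\sigma^{\tau}_t = \mathrm{Ad}_{(D\tau : D\tilde{\omega})_t} \circ \sigma^{\tilde{\omega}}_t = \mathrm{Ad}_{A^{-it}} \circ \mathrm{Ad}_{A^{it}} = \mathrm{id}. \nonumber
\end{equation}
A faithful semifinite normal weight with trivial modular automorphism group is a trace: setting $\sigma^{\tau}_t = \mathrm{id}$ in the KMS characterization exhibits $\tau(XY)$ and $\tau(YX)$ as the two boundary values of a single bounded function holomorphic on the strip, which must therefore coincide. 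Equivalently, one can run the computation directly on $\tilde{\omega}$: its $\beta = -1$ KMS condition gives $\tilde{\omega}(yx) = \tilde{\omega}(\sigma^{\tilde{\omega}}_i(x)\, y) = \tilde{\omega}(A^{-1} x A\, y)$, and substituting $y = A^{-1}X$, $x = Y$ yields the chain $\tau(XY) = \tilde{\omega}(A^{-1}XY) = \tilde{\omega}(A^{-1}YX) = \tau(YX)$, so the $A^{-1}$ insertion precisely cancels the modular twist. Finally, specializing to $G = \mathbb{R}_{\mathrm{mod}}$ with $\gamma = \mathrm{id}$ recovers the familiar modular-crossed-product trace, confirming consistency with the known case.
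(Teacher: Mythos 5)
Your proposal is correct, and it strictly contains the paper's argument while being more rigorous about it. The paper's own proof is the formal computation you relegate to your ``equivalently'' remark: it asserts that faithfulness, semifiniteness and normality ``follow immediately from the dual weight theorem and the invertibility of $A$,'' and then establishes traciality by the chain $\tilde{\omega}(A^{-1}XY) = \tilde{\omega}(A^{-1}XA\,A^{-1}Y) = \tilde{\omega}(\tilde{\sigma}_i(X)A^{-1}Y) = \tilde{\omega}(A^{-1}YX)$, inserting the unbounded operator $A^{-1}$ and applying the KMS condition at $t=i$ without addressing domains or analyticity. Your primary route instead defines $\tau$ honestly via the Pedersen--Takesaki Radon--Nikodym theorem, using the facts (which you correctly verify) that $A$ is affiliated to $M\rtimes_\alpha G$ by Stone's theorem and lies in the centralizer of $\tilde{\omega}$ since $A^{it}$ implements $\tilde{\sigma}$; the cocycle identity $(D\tau : D\tilde{\omega})_t = A^{-it}$ then gives $\sigma^{\tau}_t = \mathrm{Ad}_{A^{-it}}\circ\mathrm{Ad}_{A^{it}} = \mathrm{id}$, and Takesaki's characterization of traces as weights with trivial modular group finishes the argument. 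What your approach buys is precisely the step the paper glosses over: a bona fide meaning for $\tilde{\omega}(A^{-1}\,\cdot\,)$ as a faithful semifinite normal weight, with faithfulness tied to nonsingularity of $A$ rather than to an informal appeal to ``invertibility.'' What the paper's computation buys is brevity and a transparent physical reading (the $A^{-1}$ insertion cancels the modular twist), which your remark preserves. The two are mathematically consistent, and your closing consistency check against the modular crossed product matches the paper's Example 1.
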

One can rewrite \eqref{eq:trace} in terms of the original weight $\omega$ by making use of the unnormalized ``position eigenket'' $\ket{g}$\footnote{The use of position eigenkets here may seem unorthodox, but can be made mathematically rigorous by the theory of rigged Hilbert spaces \cite{Roberts1996,Wickramasekara_2002}.}, as in
\begin{equation}
\label{eq:trace2}
    \tau(X) = \int_G \mu(g) \delta(g)^{-1} \omega\big(\braket{e|X|g}\big) \braket{e|\lambda(\gamma(i))|g^{-1}},
\end{equation}
where we denote formally $\lambda(\gamma(i))$ to be the self-adjoint generator of the left translation operator.
We will refer to \eqref{eq:trace2} as the \textit{modular trace} on $M\rtimes_\alpha G$. We present a proof of Lemma \ref{lemma:trace} as well as a derivation of \eqref{eq:trace2} in Appendix~\ref{app: trace}.

We now give some examples where our main theorem applies and comment on the physical significance of these cases.
\begin{example}[Modular Group]
    As our first example, consider $G=(\mathbb{R},+)$, which acts on $M$ as the modular automorphism $\sigma$. $M\rtimes_\sigma \mathbb{R}$ is the familiar modular crossed product.
    $\mathbb{R}$ is unimodular with Haar measure $\mu(q)=dq$. The left translation operator is $\lambda(t)=e^{i\hat{p}t}$ where $\hat{p}\equiv -i\frac{d}{dq}$ is the momentum operator on $\mathbb{R}$. 
    If $M$ is a factor, the modular crossed product admits a unique trace given by \eqref{eq:trace2}:
    \begin{align}
    \begin{split}
        \tau(X) &= \int_\mathbb{R} dp \, e^{-p} \, \omega(\braket{0_q|X|p}) 
        = \omega \big(\braket{0_q|Xe^{-\hat{p}}|0_q} \big) \nn,
    \end{split}
    \end{align}
    where we have written the $\mathbb{R}$ integral in momentum space\footnote{In the physics literature, the modular crossed product is often constructed in the \textit{momentum basis} as opposed to the position basis used here. They are related by a simple swap $p\leftrightarrow q$.}. This reproduces the formula given in \cite{witten_gravity_2022,Jensen:2023yxy,chandrasekaran_algebra_2022,AliAhmad:2023etg}.
    
\end{example}

\begin{example}[Direct Product]
Consider $G=\mathbb{R}\times H$ where $\mathbb{R}$ is the modular automorphism group and $H$ is some locally compact group acting on $M$.
This arises in physical systems such as in AdS-Schwarzchild where $H=SO(3)$ implements rotation in a fixed background \cite{witten_gravity_2022}. Let $\mu(h)$ be the left Haar measure on $H$. The left translation operator is again $\lambda(t)=e^{i\hat{p}t}$ and since $G$ is a direct product, $\hat{p}$ commutes with $H$.
This also allows us to decompose $\ket{g}\in G$ as $\ket{g} = \ket{q,h}$ with $q\in \mathbb{R}$ and $h\in H$. The trace can then be decomposed into a double integral, where the integral over $H$ trivializes and we are left with
\begin{align}
\tau(X)
&= \omega \big(\braket{0,e|Xe^{-\hat{p}}|0,e} \big) \nn .
\end{align}
The above trace reduces back to the previous case when $H$ is trivial.
\end{example}

\begin{example}[Quantum Corner Symmetry] 
    Our theorem also holds for groups which are not direct products with $\mathbb{R}$. Let $G = SL_{2} (\mathbb{R}) \rtimes H_{3}$ where $H_{3}$ is the three-dimensional Heisenberg group. This group arises as a central extension of the Extended Corner Symmetry (ECS) group by $\mathbb{R}$~\cite{Ciambelli:2024qgi}. The ECS is the finite part of the Universal Corner Symmetry group~\cite{ciambelli2021isolated}. These groups are relevant symmetries of quantum gravity with isolated corners. Working semiclassically, we may take $M$ to be the algebra of observables of matter and propagating gravitons in a subregion bounded by the corner. Then, supposing there is a weight $\omega$ that is invariant under $G$ and whose modular flow is implemented by the $\mathbb{R}$ central subgroup of $G$, the crossed product $M \rtimes_{\alpha} G$ is semifinite.
\end{example}
\section{Discussion}
A natural generalization of Takesaki's result~\cite{takesaki1973duality} on the semifiniteness of the modular crossed product of a Type III von Neumann algebra would be that the crossed product of such an algebra with \textit{any} locally compact group containing the modular automorphism group is semifinite. In this letter, we have shown that such a statement is not necessarily true, revealing a deeper interplay between modular theory, crossed products, and physical algebras of observables in gauge theory and gravity. Our conclusion is that this is the case whenever $M$ admits a faithful, semifinite, and normal weight that is invariant under the group action and KMS with respect to the action of a subgroup. A surprising implication of this invariance is the centrality of the modular automorphism group in $G$. An almost converse statement of this result is obtained, namely that if we assume the modular automorphism of some weight is implemented in $G$, then the crossed product is semifinite \textit{if and only if} the modular automorphism group is central provided the relative commutant of $M$ is trivial.

Two immediate questions arise from this analysis:
\begin{enumerate}
    \item Is the centrality of the modular automorphism group in $G$ necessary for the semifiniteness of $N_{\alpha} := M \rtimes_{\alpha} G$ when $M$ is Type III? We aim to answer this question in future work by capitalizing on the following observation: take the crossed product of $N_{\alpha}$ itself with its modular automorphism group, $N_{\alpha} \rtimes_{\tilde{\sigma}} \mathbb{R}$. The resultant algebra tensor factorizes as $N_{\alpha} \otimes \mathcal{L}(\mathbb{R})$ if and only if $\tilde{\sigma}$ is inner, and thus $N_{\alpha}$ is semifinite.
    \item How do we deal with the semifiniteness of crossed product algebras where the relevant symmetry group does not contain a $\mathbb{R}$ subgroup, or worse yet, has no center at all? Such a group arises in near horizon physics and JT gravity, where the symmetry is $SL_{2}(\mathbb{R})$ and has a discrete $\mathbb{Z}_{2}$ center~\cite{Lin:2019qwu, Ouseph:2023juq}. One possible resolution is to embed $G$ into a group $G'$ containing $\mathbb{R}$ as a central subgroup, however this may not be physically motivated. Another resolution is that maybe in those cases, the relevant physical algebra of observables is \textit{not} the crossed product, but another subalgebra of the tensor product algebra $M \otimes B(L^{2}(G))$ invariant under a different extended automorphism. 
\end{enumerate}
In conclusion, our work has significantly extended the groups for which the crossed product $M \rtimes_{\alpha} G$ is semifinite when $M$ is type III. Moreover, our analysis has shown that the semifiniteness of crossed products with general locally compact groups is a subtle issue that is worth understanding to truly unravel the importance of crossed product algebras in physics.

\begin{acknowledgments}
We thank Ahmed Almheiri, Thomas Faulkner, Elliot Gesteau, Rob Leigh, and Antony Speranza for useful discussions.
\end{acknowledgments}

\appendix
\onecolumngrid
\section{Haagerup form of the crossed product} \label{app: hf}
Let $(M,G,\alpha)$ be a dynamical system, along with $\pi: M \rightarrow B(H)$ a faithful representation of $M$ on a Hilbert space $H$. We denote by $H_G = L^2(G,H)$ the canonical covariant representation space. The crossed product $M \rtimes_{\alpha} G$ can be obtained as a weak closure of Haagerup's algebra
\beq
	M_G \equiv \{\mathfrak{X}: G \rightarrow M \; | \; \text{continuous, compactly supported}\}. 
\eeq
Let $\rho: M_G \rightarrow B(H_G)$ be the representation induced by the canonical covariant representation $(H_G,\pi_{\alpha},\lambda)$:
\beq \label{Rep of M_G on H_G}
	\rho(\mathfrak{X}) \equiv \int_{G} \mu(g) \; \lambda(g) \pi_{\alpha}(\mathfrak{X}(g)). 
\eeq
Then, $M \rtimes_{\alpha} G = \rho(M_G)''$. In other words, $M_G$ is dense in the crossed product and thus we may establish results for $M \rtimes_{\alpha} G$ by first establishing it for $M_G$ and then carrying into the rest of the algebra by closure. 

The set $M_G$ is made an involutive algebra by endowing it with a product $\star$ and an involution $\flat$
which generalize the group algebra, which is the weak closure of the left regular representation of the group:
\begin{align}\label{prod and inv}
	\big(\mathfrak{X} \star \mathfrak{Y}\big)(g) &\equiv \int_{G} \mu(h) \; \alpha_h\big(\mathfrak{X}(gh)\big) \mathfrak{Y}(h^{-1}), \nn \\ 
 \mathfrak{X}^\flat(g) &\equiv \delta(g^{-1}) \alpha_{g^{-1}}\big(\mathfrak{X}(g^{-1})\big)^*. 
\end{align}
Given \eqref{prod and inv} we can define on $M_G$ an $M$-valued bilinear $\gamma: M_G \times M_G \rightarrow M$:
\beq \label{M valued bilinear}
	\gamma(\mathfrak{X},\mathfrak{Y}) \equiv \big(\mathfrak{X}^\flat \star\mathfrak{Y}\big)(e). 
\eeq
Plugging \eqref{prod and inv} back into \eqref{M valued bilinear} one can show that
\beq
	\big(\mathfrak{X}^\flat \star \mathfrak{Y}\big)(e) = \int_{G} \mu(g) \; \mathfrak{X}(g)^* \mathfrak{Y}(g). 
\eeq
Haagerup's operator valued weight $T: \rho(M_G) \rightarrow M$ is defined as
\beq \label{Haagerup OVW}
	 T\big(\rho(\mathfrak{X}^\flat \star \mathfrak{X})\big) \equiv \gamma(\mathfrak{X},\mathfrak{X}) = \int_{G} \mu(g) \; \mathfrak{X}(g)^* \mathfrak{X}(g).  
\eeq
As mentioned, this defines an operator valued weight on all of $M \rtimes_{\alpha} G$ since $M_G$ is a dense subset. Eqn. \eqref{Haagerup OVW} can be interpreted as the $M$-valued generalization of the Plancherel weight. Given a faithful, semifinite and normal weight $\omega\in M_*$, the dual weight $\tilde{\omega}\in (M\rtimes_\alpha G)_*$ can be simply expressed in terms of $T$ as
\begin{equation}
\label{eq:dual weight-HOVW}
    \tilde{\omega} = \omega \circ T .
\end{equation}
\section{Modular automorphism of the Plancherel Weight} \label{app: Plancherel}
Consider the dynamical system $(\mathbb{C},G,\beta)$ where $\beta$ is a trivial automorphism. Haagerup's algebra $\mathbb{C}_G$ is isomorphic to $C_c(G)$ -- the set of continuous, compactly supported complex valued functions on $G$. This algebra acts on the Hilbert space $L^2(G)$, where the representation $\rho$ given in \eqref{Rep of M_G on H_G} is the action by the group convolutional product. Thus, it is manifest that the crossed product in this case is isomorphic to the group von Neumann algebra: $\mathbb{C} \rtimes_{\beta} G = \rho(\mathbb{C}_G)'' \simeq \mathcal{L}(G)$.

A weight, $\phi$, on $\mathbb{C}$ is identified with a complex number
\begin{equation}
    \phi(z) = \phi^* z, \; \phi \in \mathbb{C}.
\end{equation}
All such weights are tracial and thus have trivial modular automorphism, $\sigma^{\phi}_t(z) = z$ for all $z \in \mathbb{C}$. Eqn. \eqref{Value of Dual weights} identifies dual weights simply as multiples of the Plancherel weight:
\begin{equation}
    \tilde{\phi}\big(\rho(\mathfrak{X}^\flat \star\mathfrak{X})\big) = \phi^* \gamma(\mathfrak{X},\mathfrak{X}) = \phi^* \omega_{G}(\mathfrak{X}^\flat \star \mathfrak{X}). 
\end{equation}
Since the action $\beta$ is trivial the cocycles $u^{g,e}_t = \mathbb{1}$. Thus, we conclude from \eqref{dual modular operator} that the modular operator associated with the Plancherel weight is simply $\bigg(\Delta_{\omega_{G}}^{it} f\bigg)(g) = \delta(g)^{it} f(g)$.
\section{A trace on the crossed product} 
\label{app: trace}
In this appendix we work under the assumptions of Theorem \ref{thm:main}. In particular, $(M,G,\alpha)$ is a dynamical system, $\gamma: \mathbb{R} \hookrightarrow G$ is an inclusion, and $\omega \in M_*$ is a faithful, semifinite, normal weight on $M$ which is quasi $\alpha$-invariant and KMS with respect to $\alpha \circ \gamma: \mathbb{R} \rightarrow \text{Aut}(M)$. The map 
\beq \label{Trace}
	\tau: M \rtimes_{\alpha} G \rightarrow \mathbb{C}, \qquad \tau(X) \equiv \tilde{\omega}\big(A^{-1} X \big),
\eeq
will automatically be a faithful semifinite normal tracial weight on $M \rtimes_{\alpha} G$ provided $A \in M \rtimes_{\alpha} G$ implements the modular automorphism of the weight $\tilde{\omega}$ as $\tilde{\sigma}_t = \text{Ad}_{A^{it}}$. The fact that \eqref{Trace} is a faithful, semifinite and normal follows immediately from the dual weight theorem and the invertiblility of $A$. The fact that it is tracial follows from the KMS condition 
\begin{flalign} \label{Proof of Trace}
\begin{split}
	\tau(XY) &= \tilde{\omega}\big(A^{-1} XY \big) \\
	&= \tilde{\omega}\big(A^{-1} X A A^{-1} Y \big) \\
	&= \tilde{\omega}\big(\sigma_{i}(X) A^{-1} Y \big) \\
	&= \tilde{\omega}\big(A^{-1}Y X\big) = \tau(YX). 
\end{split}
\end{flalign}
To construct the trace more explicitly, note that the dual weight can be written with Haagerup's operator valued weight \eqref{eq:dual weight-HOVW}. Applying \eqref{prod and inv} to the result we have
\begin{align}
\begin{split}
  \tau(X) &= \tilde{\omega} \left( A^{-1}X \right) \\
  &= \omega\bigg(\left(\rho^{-1}(A^{-1})\star \rho^{-1}(X)\right)(e)\bigg) \\
  &= \int_G \mu(g)\, \omega\bigg( \alpha_g\big(\rho^{-1}(A^{-1})(g)\big)\cdot(\rho^{-1}(X)(g^{-1})) \bigg).
\end{split}
\end{align}
We can explicitly write down the inverse map $\rho^{-1}$ by making use of the position eigenket $\ket{g}$\footnote{Heuristically, $\ket{g}$ should be thought of having a ``wave function'' with support only at $g$.} with a formal inner product $\braket{h|g}=\delta(h^{-1}g)$ and resolution of identity $\mathbb{1}=\int_G \mu(g) \ket{g}\bra{g}$. For $\mathfrak{X} \in M_G$ we compute
\begin{align}
\begin{split}
\Braket{e|\rho(\mathfrak{{X}})|g} &= \int \mu(h) \Braket{e|\lambda(h)\pi_\alpha(\mathfrak{X}(h))|g} \\
&= \int \mu(h) \braket{h^{-1}|g}\alpha_{g^{-1}}(\mathfrak{X}(h)) \\
&= \alpha_{g^{-1}}(\mathfrak{X}(g^{-1})),
\end{split}
\end{align}
where we have used the decomposition $\pi_\alpha(X) = \int \mu(g) \; \alpha_{g^{-1}}(X) \otimes \ket{g}\bra{g}$.
Thus, we obtain the relation $\mathfrak{X}(g) = \alpha_{g^{-1}}\left(\Braket{e|\rho(\mathfrak{X})|g^{-1}}\right)$ and we can write the trace as
\begin{align}
\begin{split}
\tau(X) &= \int \mu(g) \,\omega \bigg( \braket{e|A^{-1}|g^{-1}} \alpha_{g}(\braket{e|X|g}) \bigg) \\
&=\int \mu(g) \, \delta(g)^{-1}\omega \big( \braket{e|X|g} \big) \, \braket{e|\lambda(\gamma(i))|g^{-1}}.
\end{split}
\end{align}
To move from the first to the second line we have used the quasi invariance condition of $\omega$ and the relation $A=\mathbb{1}\otimes\gamma(\lambda(i))$.

\section{Dual modular action on generators of the crossed product} \label{ap: inner}
In this appendix, we include the computation of the dual modular automorphism group of the dual weight $\tilde{\omega}$ on the generators of the crossed product $M \rtimes_{\alpha} G$. 
\begin{enumerate}
 \item The algebra representation:
    \begin{align}
          [\tilde{\sigma}_{t}(\pi_{\alpha}(x)) \xi ](s) &= [\tilde{\Delta}^{it} \pi_{\alpha}(x) \tilde{\Delta}^{-it} \xi](s),\nn\\
         &= \alpha_{ts^{-1}}(x) \xi(s),\nn\\
         &= \pi_{\alpha}\circ \alpha_{t} (x) \xi(s),\nn\\
         &= \pi_{\alpha}\circ \sigma_{t} (x) \xi(s) . \implies \tilde{\sigma}_{t}(\pi_{\alpha}(x)) = \pi_{\alpha}\circ \sigma_{t}(x).
    \end{align}

    \item The group representation: 
    \begin{align}
        [\tilde{\sigma}_{t}(\lambda(g)) \xi ](h) &= [\tilde{\Delta}^{it} \lambda(g) \tilde{\Delta}^{-it} \xi](h),\nn\\
        &= \delta(h)^{it}\Delta_{h,e}^{it}[ \lambda(g) \tilde{\Delta}^{-it} \xi](h), \nn \\
       &= \delta(g)^{it}\Delta_{h,e}^{it}[ \tilde{\Delta}^{-it} \xi](g^{-1}h), \nn \\
       &= \delta(g)^{it}\Delta_{h,e}^{it}\Delta_{g^{-1}h,e}^{-it}\xi(g^{-1}h), \nn \\
       &= \delta(g)^{it}\pi\left( u_{t}^{h,g^{-1}h} \right)\xi(g^{-1}h), \nn \\
       &= \delta(g)^{it}\pi \circ \alpha_{h}^{-1}\left( u_{t}^{e,g^{-1}} \right)\xi(g^{-1}h), \nn \\
        &= [\delta(g)^{it}\pi_{\alpha}\left( u_{t}^{e,g^{-1}} \right)\lambda(g)\xi](h), \nn \\
                 &= [\delta(g)^{it}\lambda(g) \lambda(g)^{-1}\pi_{\alpha}\left( u_{t}^{e,g^{-1}} \right)\lambda(g)\xi](h), \nn \\
                   &= [\delta(g)^{it}\lambda(g) \pi_{\alpha} \circ \alpha_{g}^{-1}\left( u_{t}^{e,g^{-1}} \right)\xi](h), \nn \\
                    &= [\delta(g)^{it}\lambda(g) \pi_{\alpha}\left( u_{t}^{g,e} \right)\xi](h),  \implies \tilde{\sigma}_{t}(\lambda(g)) = \delta(g)^{it}\lambda(g) \pi_{\alpha}\left( u_{t}^{g,e} \right). \label{dual on group}
    \end{align}
\end{enumerate}
We now use the above to prove a lemma useful in the proof of Theorem~\ref{thm:main}.
\begin{lemma} \label{central implies inner}
    Suppose the modular automorphism of a weight $\omega$, $\sigma^{\omega}$, commutes with the group action $\alpha: G \to Aut(M)$ on a von Neumann algebra, where $\alpha|_{\mathbb{R}} = \sigma$. Then, $\sigma_{t} = \alpha_{\gamma(t)}$ is inner implemented in $M \rtimes_{\alpha} G$.
\end{lemma}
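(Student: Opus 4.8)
The plan is to exhibit an explicit unitary implementer living inside the crossed product, namely the strongly continuous one-parameter group $t \mapsto \lambda(\gamma(t))$, which belongs to $M \rtimes_{\alpha} G$ because $\lambda(G)$ is one of its generating sets. I would show that the canonical extension $\tilde{\sigma}$ of $\sigma$ to the crossed product satisfies $\tilde{\sigma}_t = \text{Ad}_{\lambda(\gamma(t))}$. Since $M \rtimes_{\alpha} G = \{\pi_{\alpha}(M), \lambda(G)\}''$, it suffices to verify this identity on the two generating sets $\pi_{\alpha}(M)$ and $\lambda(G)$ and then extend by weak density together with the normality of both automorphisms.

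On the algebra generators the check is immediate: the covariance relation $\text{Ad}_{\lambda(g)} \circ \pi_{\alpha} = \pi_{\alpha} \circ \alpha_g$ evaluated at $g = \gamma(t)$ yields $\text{Ad}_{\lambda(\gamma(t))}(\pi_{\alpha}(x)) = \pi_{\alpha}(\alpha_{\gamma(t)}(x)) = \pi_{\alpha}(\sigma_t(x))$, which coincides with the value $\tilde{\sigma}_t(\pi_{\alpha}(x)) = \pi_{\alpha}(\sigma_t(x))$ computed in the first part of this appendix.

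The substance of the argument lies on the group generators, where one must reconcile $\text{Ad}_{\lambda(\gamma(t))}(\lambda(g)) = \lambda(\gamma(t) g \gamma(t)^{-1})$ with the value $\tilde{\sigma}_t(\lambda(g)) = \delta(g)^{it} \lambda(g) \pi_{\alpha}(u^{g,e}_t)$ found in Eq.~\eqref{dual on group}. Here the commutation hypothesis does two things at once. First, through $\sigma^{\omega_g} = \alpha_g^{-1} \circ \sigma^{\omega} \circ \alpha_g = \sigma^{\omega}$~\cite{herman1970states}, the weights $\omega$ and $\omega_g$ share a modular automorphism group, which forces the Connes cocycle $u^{g,e}_t$ into the centralizer of $\omega$; in the quasi $\alpha$-invariant regime relevant to Theorem~\ref{thm:main} it collapses to the scalar $\delta(g)^{-it}\mathbb{1}$, so that $\delta(g)^{it}\pi_{\alpha}(u^{g,e}_t) = \mathbb{1}$ and hence $\tilde{\sigma}_t(\lambda(g)) = \lambda(g)$. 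Second, the same hypothesis gives $\alpha_{\gamma(t) g \gamma(t)^{-1}} = \alpha_g$, and because $\lambda$ is a faithful representation of $G$ this upgrades to $\lambda(\gamma(t) g \gamma(t)^{-1}) = \lambda(g)$, i.e. $\gamma(\mathbb{R})$ is central in $G$. Both sides then equal $\lambda(g)$ and the identity holds.

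I expect the centrality step to be the main obstacle: it is the point at which the operator-level commutation of $\sigma$ with $\alpha$ must be promoted to a genuine group-theoretic statement about $\gamma(\mathbb{R}) < G$, and it is exactly the mechanism by which the surprising centrality consequence advertised in the remarks enters the analysis. Once the two generator computations agree, I would close the proof by observing that $\tilde{\sigma}_t$ and $\text{Ad}_{\lambda(\gamma(t))}$ are both normal automorphisms agreeing on a weakly dense $*$-subalgebra, hence equal on all of $M \rtimes_{\alpha} G$; since the implementing unitaries $\lambda(\gamma(t))$ belong to the crossed product, the flow $\sigma_t = \alpha_{\gamma(t)}$ is inner implemented, as claimed.
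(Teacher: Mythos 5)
Your generator-by-generator strategy (check agreement on $\pi_{\alpha}(M)$ and $\lambda(G)$, then extend by weak density and normality) matches the paper's, and your computation on $\pi_{\alpha}(M)$ is correct. The gap is in how you handle $\lambda(G)$. The lemma's hypothesis is only that $\sigma^{\omega}$ commutes with $\alpha$, and its subject is the canonical extension $\sigma_t \otimes \mathbb{1} = \text{Ad}_{\Delta^{it}\otimes\mathbb{1}}$ of $\alpha_{\gamma(t)}$ to the crossed product. You instead quote the dual-weight formula \eqref{dual on group} for the modular flow $\tilde{\sigma}$ of $\tilde{\omega}$ and collapse the cocycle $u^{g,e}_t$ to the scalar $\delta(g)^{-it}\mathbb{1}$ by appealing to quasi $\alpha$-invariance, ``the regime relevant to Theorem~\ref{thm:main}.'' Quasi-invariance is not among the lemma's hypotheses, and commutation alone does not deliver it: commutation gives $\sigma^{\omega\circ\alpha_g}=\sigma^{\omega}$ together with the $\sigma^{\omega}$-invariance of $\omega\circ\alpha_g$, so by the Pedersen--Takesaki theorem $u^{g,e}_t = h_g^{it}$ for a positive operator $h_g$ affiliated with the centralizer of $\omega$ --- in general not a scalar. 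Worse, the identification of $\tilde{\sigma}$ with $\sigma_t\otimes\mathbb{1}$, which you need for your conclusion to say anything about the lemma's subject, is itself equivalent to $\tilde{\Delta}=\Delta\otimes\mathbb{1}$, i.e.\ to quasi-invariance. The paper's proof never touches the cocycle: by the commutation theorem, $\sigma_t\otimes\mathbb{1}$ acts on crossed-product elements as $\mathbb{1}\otimes\text{Ad}_{\lambda_r(\gamma(-t))}$, which fixes $\lambda(g)$ because left and right translations commute (equivalently, $\Delta^{it}\otimes\mathbb{1}$ and $\lambda(g)$ act on different tensor factors). So your argument does establish what Theorem~\ref{thm:main} actually needs --- there quasi-invariance is assumption (i) --- but it proves the lemma only under a strictly stronger hypothesis than the one stated.

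Separately, your derivation of centrality is a non sequitur. From commutation you correctly obtain $\alpha_{\gamma(t)g\gamma(t)^{-1}}=\alpha_g$, but passing from this to $\gamma(t)g\gamma(t)^{-1}=g$ requires injectivity of $\alpha: G \to \text{Aut}(M)$, which is not assumed. Faithfulness of $\lambda$ cannot do this job: it makes the target equality $\lambda(\gamma(t)g\gamma(t)^{-1})=\lambda(g)$ \emph{equivalent} to centrality, so it cannot be the mechanism that produces it. The paper's own proof sidesteps this by invoking centrality directly as ``the assumption'' (its remark that the hypotheses of Theorem~\ref{thm:main} imply centrality likewise tacitly uses faithfulness of $\alpha$). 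If you want your route to close, either add injectivity of $\alpha$ as a hypothesis, or phrase the lemma, as its title suggests, with centrality of $\gamma(\mathbb{R})$ assumed outright.
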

\begin{proof}
    Beginning from the commutation theorem, we have
    \begin{equation}
        \alpha_{g} \otimes \text{Ad}_{ \lambda_{r}(g)} (X) = X
    \end{equation}
    for any element $X \in M \rtimes_{\alpha} G$. Acting with $\alpha_{g}^{-1} \otimes I$ from the left gives
    \begin{equation}
         I \otimes \text{Ad}_{\lambda_{r}(g)} = \alpha_{g^{-1}} \otimes I.
    \end{equation}
    Consider the action of $\sigma_{t} = \alpha_{\gamma(t)}$ on the two generators of the crossed product
    \begin{align}
    (\sigma_{t} \otimes I) \left[ \lambda(g)\right] &= (I \otimes \text{Ad}_{ \lambda_{r}(\gamma(-t))})(\lambda(g)) = \lambda(g),
\end{align}
where we have used that the left and right regular representations commute, and
\begin{align}
   \left[ (\sigma_{t} \otimes I)  \pi_{\alpha}(x) \xi\right](g)&= (I \otimes \text{Ad}_{ \lambda_{r}(\gamma(-t))})( \pi_{\alpha}(x) \xi)(g) , \nn \\
   &= (I \otimes \text{Ad}_{\lambda(\gamma(t))})(\pi_{\alpha}(x) \xi)(g),\nn \\
   &= (I \otimes \text{Ad}_{\lambda(\gamma(t))})(\pi_{\alpha}(x) \xi)(g),\nn \\
   &= (\pi_{\alpha} \circ \sigma_{t} (x) \xi)(g),\nn
\end{align}
where we have used that $\lambda_{r}(\gamma(t)) = \lambda(\gamma(-t))$ under the assumption that $\mathbb{R} < G$ is central. This may be verified easily on an element $\xi \in L^{2}(G,H)$. Note that we can write the above two results as
\begin{equation}
    (\sigma_{t} \otimes I) \left[ \lambda(g)\right]  = \lambda(\gamma(t)g\gamma(-t)), \quad       (\sigma_{t} \otimes I)  \pi_{\alpha}(x) = \lambda(\gamma(t)) \pi_{\alpha}(x) \lambda(\gamma(-t)).
\end{equation}
We see that $\text{Ad}_{\lambda(\gamma(t))}$ and $\sigma$ agree on the two generators of the crossed product $M \rtimes_{\alpha} G$ and so are identified on the entire algebra. The former is manifestly inner, and so we are done.
\end{proof}
\section{Inner implementer of the dual automorphisms } \label{app: central}
In this appendix, we show that when $\pi_{\alpha}(M)^{c} = \mathbb{C}I$ and $\omega$ is KMS with respect to $\alpha \circ \gamma: \mathbb{R} \rightarrow \text{Aut}(M)$, then $\lambda(\gamma(t))$ is the unique inner implementer of the dual automorphism of $\omega$. Indeed, suppose that there exists some a family of unitary operators $Q_{t} \in M \rtimes_{\alpha} G$ such that 
\begin{equation}
    \tilde{\sigma} = \text{Ad}_{Q_{t}}.
\end{equation}
Then, we may constrain the properties of $Q_{t}$ by looking at the product of two generators
\begin{align}
    \tilde{\sigma}(\pi_{\alpha}(x)\lambda(g)) &= Q_{t}\pi_{\alpha}(x)\lambda(g) Q_{-t}, \nn \\
    &= Q_{t}\pi_{\alpha}(x)Q_{-t} Q_{t}\lambda(g) Q_{-t}, \nn \\
    &= \lambda(\gamma(t))\pi_{\alpha}(x)\lambda(\gamma(t))^{-1}  Q_{t}\lambda(g) Q_{-t}.
\end{align}
This is equivalent to
\begin{equation}
    Q_{-t}\lambda(\gamma(t)) \pi_{\alpha}(x) \lambda(\gamma(t))^{-1} Q_{t} = \pi_{\alpha}(x) \implies X_{t} := Q_{-t}\lambda(\gamma(t)) \in \pi_{\alpha}(M)' \cap M \rtimes_{\alpha}G.
\end{equation}
The newly defined family of unitary operators $X_{t}$ must then lie in the relative commutant of $\pi_{\alpha}(M)$ for $Q_{t}$ to innerly generate the modular flow.
Under the assumption, $X_{t}$ must be proportional to the identity implying that $Q_{t} = \lambda(\gamma(t))$ up to a scalar. In other words, in this case, $\text{Ad}_{\lambda}$ is the unique (up to scaling) inner implementer of the dual modular automorphism.

\end{document}